%
\documentclass[runningheads]{llncs}
\usepackage[paperheight=235mm, paperwidth=155mm,textwidth=12.2cm,textheight=19.3cm]{geometry}
\usepackage{bbding}

\usepackage{amssymb,amsmath,amsthm}
\usepackage{mathtools}
\usepackage{stmaryrd}

\usepackage{algorithm}
\usepackage{algpseudocode}

\usepackage{url}
\AtBeginDocument{%
  \def\doi#1{\url{https://doi.org/#1}}}
\makeatother

\usepackage{tcolorbox}

\usepackage[]{todonotes}

\usepackage{float}
\usepackage[caption=false]{subfig}

\usepackage{tikz}
\usetikzlibrary{matrix}
\usetikzlibrary{arrows}
\usetikzlibrary{shapes}
\usetikzlibrary{automata, positioning,through,calc}

\tikzset{
->, 
>=stealth, 
node distance=2cm, 
every state/.style={thick, fill=gray!10,ellipse}, 
interm state/.style={thick, fill=gray!10, draw, rectangle, inner sep=6}, 
initial text=$ $, 
}

\newcommand{\lat}{\mathrm{lat}}
\newcommand{\fl}{\mathrm{flow}}

\newcommand{\SC}{\mathrm{SC}}
\newcommand{\canflow}{\sqsubseteq}
\newcommand{\SCTop}{\top}
\newcommand{\SCBot}{\bot}
\newcommand{\Label}{L}

\newcommand{\toLattice}[1]{{\llbracket #1 \rrbracket}_{\lat}}
\newcommand{\toFlowRel}[1]{{\llbracket #1 \rrbracket}_{\fl}}

\newcommand{\toLatC}[1]{{\llbracket #1 \rrbracket}}
\newcommand{\auxClasses}{\mathrm{addLeastUpperLabels}}


\newcommand{\InVars}{\mathrm{X}}
\newcommand{\OutVars}{\mathrm{Y}}
\newcommand{\AllVars}{\mathrm{Z}}


\newcommand{\AFlow}{A_{\fl}}
\newcommand{\GFlow}{G_{\fl}}



\newcommand{\ContA}{A}
\newcommand{\ContG}{G}

\newcommand{\Flows}{\mathcal{M}}

\newcommand{\FlowsFrom}{\mathrm{U}}
\newcommand{\FlowsTo}{\mathrm{V}}


\newcommand{\tAnd}{\text{ and }}

\newcommand{\dwFs}{distw\_f\_s}
\newcommand{\dwBs}{distw\_b\_s}
\newcommand{\ods}{wheel\_tick}
\newcommand{\dwFt}{distw\_f\_t}
\newcommand{\dwBt}{distw\_b\_t}
\newcommand{\odt}{odometer}

\begin{document}

\title{Information-flow Interfaces and\\Security Lattices} 
 
%
\author{
Ezio Bartocci\inst{1} \and
Thomas A. Henzinger\inst{2}\and \\
Dejan Nickovic\inst{3} \and
Ana Oliveira da Costa\inst{2}(\Envelope)
 }
\authorrunning{E. Bartocci et al.}
%
%
 \institute{
 Technische Universit\"at Wien, Vienna, Austria \\
 \email{ezio.bartocci@tuwien.ac.at}  
 \and
IST Austria, Klosterneuburg, Austria \\ \email{\{tah, ana.costa\}@ist.ac.at} \and
AIT Austrian Institute of Technology, Vienna, Austria \\ \email{dejan.nickovic@ait.ac.at}
}

\maketitle

\begin{abstract}
Information-flow interfaces is a formalism recently proposed for specifying, composing, and refining system-wide security requirements. 
In this work, we show  how the widely used concept of security lattices provides a natural semantic interpretation for information-flow interfaces.
\end{abstract}

\section{Introduction}

Modern information and communication technologies are reaching unprecedented size and complexity, exposing them more and more to a wide variety of cyber-attacks. The security-by-design engineering approach addresses this problem by enforcing security requirements throughout all phases of the design cycle, including early design stages.

\emph{Information-flow interfaces}~\cite{IFInterfaces22FASE,IFInterfaces20Arxiv} is a recently introduced security-by-design formalism for the compositional development of systems that are guaranteed to implement specified information-flow policies. 
This formalism defines information-flow requirements using \emph{no-flow} relations. 
A no-flow relation specifies a forbidden exchange of information from one system variable to another. 
This framework allows the engineer to combine \emph{top-down} and \emph{bottom-up} design activities: a top-down step consists of decomposing the system-wide requirements and mapping them to sub-systems and components, while a bottom-up step consists of assembling the system by combining available elements. 
To support the compositional design of systems, information-flow interfaces distinguish between \emph{assumptions} about the component's environment and \emph{guarantees} that the component provides when it operates in a proper environment. 
In addition, information-flow interfaces satisfy the two main properties of an interface theory: \emph{incremental design} and \emph{independent implementability}. 
Incremental design enables composing sub-systems and components without knowing the complete design context. 
Independent implementability allows the development of sub-systems by different design teams with system integration that guarantees the preservation of system-level requirements.

We illustrate the top-down design using information-flow interfaces with an automotive example -- a simplified version of a shared communication infrastructure connecting a wheel sensor and distance warners to the braking system and the odometer~\cite{IFInterfaces22FASE}. 
The shared communication infrastructure consists of (1) two (front and back) distance warners, which are sensors that estimate the proximity of the vehicle to other objects; (2) the wheel sensor, which senses the wheel rotations; (3) the odometer, a component that uses the wheel sensor data to measure the distance travelled by a vehicle; (4) a braking system, which takes the decision when to brake based on the information from the distance warners and (5) a shared bus that enables the communication between the other components. 
In this example, the braking system has a safety-critical function; consequently, communication with the distance warners has high-integrity. 
In contrast, the odometer and wheel sensor communication has low-integrity requirement. 
The main system-level requirement for this system is to guarantee the integrity of the communication channel when performing the safety-critical function. 
In other words, the design forbids any flow of information from the wheel sensor to the braking system.

\begin{figure}[htb]
\centering
\scalebox{0.7}{ \input 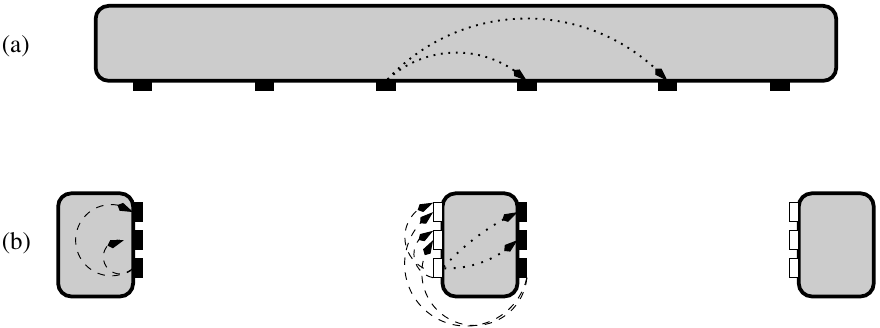_t }
\caption{An example of an automotive shared communication infrastructure specified with information-flow interfaces.}
\label{fig:overview}
\end{figure}

Figure~\ref{fig:overview} shows (a) the system-level interface specifying the above information-flow requirements and (b) one decomposition and refinement step. 
We depict an interface with a gray rounded-edged box. 
The white and black rectangles attached to the interface denote its input and output ports, respectively. 
We use dashed arrows to specify no-flow relations. 
A no-flow relation is an assumption if its target is an input port and a guarantee otherwise.
The system-level security requirement from Figure~\ref{fig:overview}~(a) consists of two no-flow relations that forbid the exchange of information between the wheel sensor and the front and back distance warner targets. 
We observe that this interface specifies the properties of a closed system without defining its components and their interaction.

In the refinement step, shown in Figure~\ref{fig:overview}~(b), the interface is decomposed into three sub-systems: sending, receiving and bus. The two system-level requirements are naturally transferred to the bus sub-system. However, mapping these two properties from the system-level interface to the bus interface is not sufficient to enforce the original requirements. Without additional restrictions, there would be for instance an allowed flow in the decomposed system from the wheel sensor to the odometer, from the odometer to the distance wheeler sources, and from the sources to the distance wheeler targets, resulting in the violation of the system level requirements. To guarantee the enforcement of the system-level requirement after integration of sub-systems, we need to add additional assumptions (no-flow relations from the odometer and the wheel sensor to the distance wheelers in the bus sub-system) and guarantees (no-flow relations from the wheel sensor to the distance wheelers). In the subsequent design step, these three interfaces can be either further decomposed into smaller components, or assigned to different design teams for implementation.  

Information-flow interfaces follow a declarative style of specification with the focus on what are the forbidden flows of information between system variables, rather then on how to guarantee the absence of the specified flows. This is a design choice that gives a syntactic flavor to the interface theory and its operations -- composition of interfaces and their refinement consist of manipulating no-flow relations only. The advantage of this approach is that it enables simple incremental development of secure system architectures, while allowing to postpone semantic and implementation choices for the components to the later design stages. Information-flow interfaces have been equipped with a semantic interpretation in terms of \emph{contracts} defined as pairs of assumptions and guarantees being sets of \emph{flow relations}~\cite{IFInterfaces20Arxiv}.

In this paper, we propose \emph{security lattices} as an alternative semantic interpretation for information-flow interfaces. 
Equipping this interface theory with semantics based on security lattices has two major advantages. 
First, security lattices are commonly used in the security community for specifying information-flow and other security-related policies \cite{latticecontrol93,sabelfeld2003language,focardi2011types,austin2012multiple,IFCIneficient20}. 
Therefore, their integration into the interface theory facilitates the security engineers' adoption of the compositional design framework. 
Second, the information-flow theory's compositional nature can help improve the efficiency of verification mechanisms whose performance depends on the shape of the lattice (e.g., number of labels, number of interactions between labels, ...).  
By using the information-flow interfaces equipped with security lattices, we can help to design more efficient lattices by (i) decomposing the system design into sub-systems and removing unnecessary labels and (ii) providing a list of lattices to choose from.

\section{Background}

This section provides the necessary background for equipping the information-flow interface theory with a semantic interpretation that uses security lattices. We first recall information-flow contracts as the original semantic interpretation for information-flow interfaces~\cite{IFInterfaces20Arxiv}. We then provide an overview of security lattices~\cite{denninglattice76} for defining security policies. Throughout this section, we use the shared communication infrastructure example from Figure~\ref{fig:overview} to illustrate the semantic interpretation of interfaces using information-flow contracts and sketch the idea of replacing contracts with more common security lattices.  

\subsubsection{Information-Flow Contracts.}

The information-flow interfaces use the syntax described in the shared communication infrastructure example and illustrated in Figure~\ref{fig:overview} to specify security requirements in terms of no-flow relations. \emph{Information-flow contracts} \cite{IFInterfaces20Arxiv} provide a semantic interpretation to the interface theory. They define assumptions on the environment and guarantees on implementations in terms of \emph{flow relations}.
A relation \(\Flows \subseteq (\FlowsFrom\mathop{\cup} \FlowsTo) \times \FlowsTo\) is a flow relation iff it is a transitive relation over \(\FlowsFrom\mathop{\cup} \FlowsTo\), and reflexive over~\(\FlowsTo\).
Let \(\InVars\) and \(\OutVars\) be disjoint sets of input and output variables, respectively, with the set of all variables defined as \(\AllVars = \InVars \mathop{\cup} \OutVars\).
An \emph{information-flow contract} is a tuple \((\InVars, \OutVars, \AFlow, \GFlow)\) where \(\AFlow\subseteq 2^{\AllVars \times \InVars}\) is a set of flow relations to input variables, called \emph{(contract) assumption}; and \(\GFlow\subseteq 2^{\AllVars \times \OutVars}\) is a set of flow relations to output variables, called \emph{(contract) guarantee}.

\begin{example}
\begin{figure}
\centering
\scalebox{0.48}{ \input 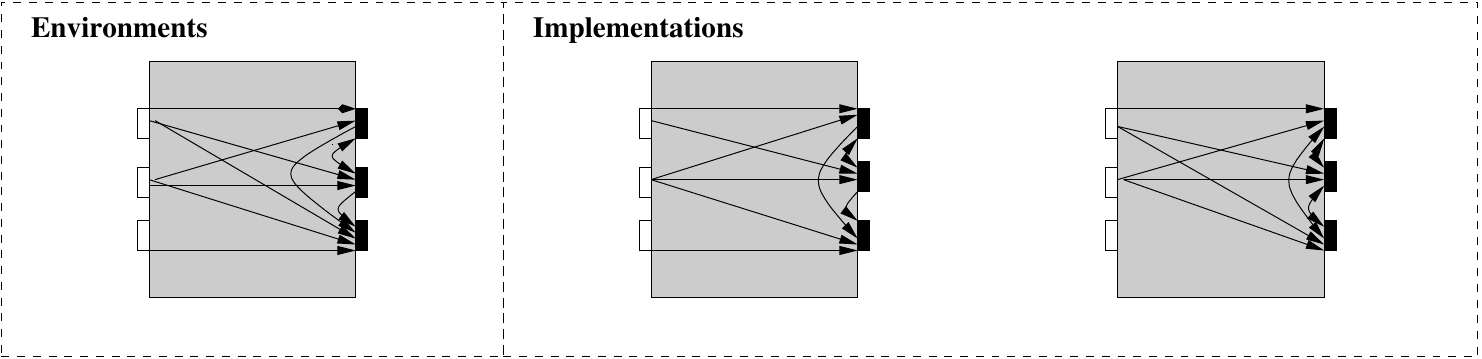_t }
\caption{Information-flow contract for the shared communication infrastructure interface, where $E$ is the maximal permissible environment, while $I_1$ and $I_2$ are maximal implementations that satisfy the interface specification.}
\label{fig:contract}
\end{figure}

We refer to the bus sub-system in the shared communication infrastructure from Figure~\ref{fig:overview} to illustrate its semantic interpretation with information-flow contracts. 
Figure~\ref{fig:contract} depicts the information-flow contract for this interface. 
We use sets of components to graphically portray contract assumption and guarantee, where we depict a component as a gray box decorated with input and output variables, with solid arrows representing allowed flows. 
The assumption contains the set of all permissible environments, while the guarantee contains the set of all correct implementations. 
In this example, we depict only the maximal components, i.e., the contract assumption and guarantee contain all components that refine these components.
In particular, there is one maximal environment $E$ and two maximal implementations  $I_1$ and $I_2$.
\end{example}

\subsubsection{Security Lattices.}

Information-flow policies are usually defined for a set of \emph{(security) labels}, also referred to in the literature as security classes.
Labels and objects of a system (for example, variables or ports) usually define distinct sets of objects, with a label categorizing the role of an object within an information-flow policy.

A policy is specified with a \emph{label can-flow} relation \(\sqsubseteq\) over a set of security labels \(\SC\).
Then, \(\Label \sqsubseteq \Label'\), for labels \(\Label \in \SC\) and \(\Label' \in \SC\), means that information on entities assigned to the label \(\Label\) is allowed to flow to entities in \(\Label'\). 
Note that flow relations are defined for system's objects while can-flow relations are defined over security labels.
Additionally, a policy provides a joint operator \(\oplus \subseteq \SC \times\SC\) specifying how to combine entities assigned with different labels.
This operator can be used to dynamically assign a security label to objects that had no prior assignment.
A policy is fully specified by a tuple  \((\SC, \sqsubseteq, \oplus)\).
In \cite{denninglattice76}, Denning proposed the following axioms for security policies \((\SC, \sqsubseteq, \oplus)\):
\begin{description}
\item [Finiteness] the set of security classes \(\SC\) is finite;
\item [Order] \((\SC,\sqsubseteq)\) defines a partial order; 
\item [Public Label] there exists an unique lower bound in \(\SC\) with respect to \(\sqsubseteq\);
\item [Totally of Label Combining] the joint operator \(\oplus\) is a least upper bound operator defined for every pair of security labels.
\end{description}
A security policy that satisfies these assumptions, 
defines a bounded lattice \cite{denninglattice76}.
Then, \(\sqsubseteq\) is a reflexive, transitive, and antisymmetric relation over \(\SC\) with a unique upper bound (represented by \(\SCTop\)) and a unique lower bound (represented by~\(\SCBot\)).
From now on, for simplicity of presentation, we denote security policies by \((\SC,\canflow)\) only. 
Whenever \((\SC,\canflow)\) defines a lattice, the definition of \(\oplus\) over a pair of labels in \(\SC\) is given by their least upper bound in \((\SC,\canflow)\).

\begin{example}
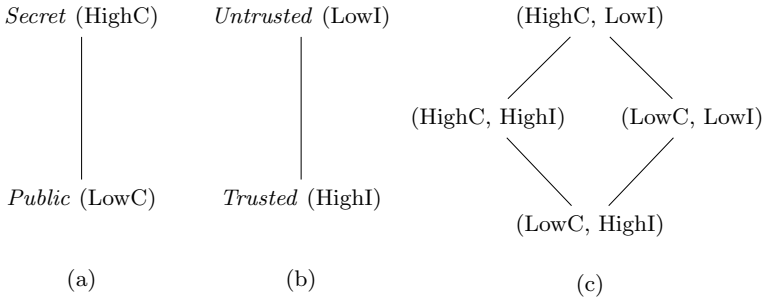
\begin{figure}
    \centering
\scalebox{0.95}{
\begin{tikzpicture}[node distance = 2cm]
	\node (topC) {\emph{Secret} (HighC)};
	\node [below = of topC] (belowC)  {\emph{Public} (LowC)};
	\node [below = 0.6 of belowC] (labelC)  {(a)};

	\node [right= 0.5cm of topC] (topI) {\emph{Untrusted} (LowI)};
	\node [below = of topI] (belowI)  {\emph{Trusted} (HighI)};
	\node [below = 0.6 of belowI] (labelI)  {(b)};
	
	\node [right= 1.5cm of topI] (top2) {(HighC, LowI)};
	\node [below left of = top2] (right2)  {(HighC, HighI)};
	\node [below right of = top2] (left2)  {(LowC, LowI)};
	\node [below = 1.5cm of $(right2.north)!0.5!(left2.north)$] (bottom2) {(LowC, HighI)};
	\node [below = 0.3 of bottom2] (labelC)  {(c)};
	
	\draw[-] (topC)--(belowC);
	
	\draw[-] (topI)--(belowI);

	\draw[-] (top2)--(right2);
	\draw[-] (top2)--(left2);
	\draw[-] (left2)--(bottom2);
	\draw[-] (right2)--(bottom2);
 	
\end{tikzpicture}
}
\caption{Security lattices for (a) confidentiality, (b) integrity and (c) both combined.}
\label{fig:enter-label}
\end{figure}

In Figure \ref{fig:enter-label}, we have three examples of security lattices represented as Hasse diagrams.
Reflexive and transitive arrows are omitted in the diagrams.
The two Hasse diagrams to the left depict two linear lattices: one for confidentiality and one for integrity. 
As illustrated by the lattices, confidentiality and integrity adopt inverse views on the flow from high to low labelled variables.
The lattice on the right side depicts the result of combining the confidentiality and integrity lattices.
%
%
Figure~\ref{fig:sl} below depicts a security lattice that provides an elegant and succinct representation of the maximal implementation $I_1$ from Figure~\ref{fig:contract}.
\begin{figure}
\centering
\scalebox{0.7}{ \input 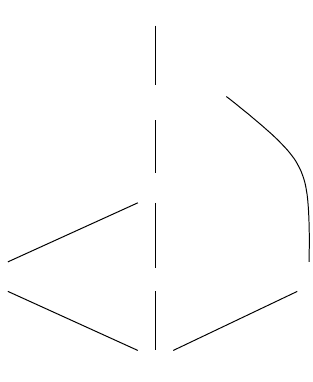_t }
\caption{Security lattice representation of the maximal implementation $I_1$.}
\vspace{-0.6cm}
\label{fig:sl}
\end{figure}
\end{example}

\section{Security Lattices For Information-Flow Interfaces}

In this section, we introduce security lattice contracts and show how to translate from flow relations to security lattices and vice-versa.
As for information-flow contracts, we define security lattice contracts over disjoint sets of input and output variables, denoted \(\InVars\) and \(\OutVars\) respectively, with the set of all variables defined as \(\AllVars = \InVars \cup \OutVars\).
Note that while our atomic objects are variables, 
security labels are defined by sets of variables.
For a set of variables \(\AllVars\), we define the set with all possible security labels over \(\AllVars\) as \(\SC^{\AllVars} = 2^{\AllVars}  \cup \{\top, \bot\}\).
A variable \(x\) is labelled with \(\Label \mathop{\in} \SC\) iff \(x \mathop{\in} \Label\).

\begin{definition}
A \emph{security lattice contract} is a tuple \((\InVars, \OutVars, \ContA, \ContG)\) where \(\InVars\) and \(\OutVars\) are disjoint sets of variables and \(\AllVars=\InVars\cup\OutVars\); \(\ContA = (\SC_{\ContA}, \canflow_{\ContA})\) is a security lattice with \(\SC_{\ContA} \subseteq \SC^{\AllVars}\),  called contract \emph{assumption}; and  \(\ContG = (\SC_{\ContG}, \canflow_{\ContG})\) is a security lattice with \(\SC_{\ContG} \subseteq \SC^{\AllVars}\), called contract \emph{guarantee}.
\end{definition}

\subsubsection{From Flow Relations to Security Lattices and Back.}

While security lattices specify requirements on security labels that are later assigned to objects of our design (for example, variables), flow relations specify flow requirements directly between the objects of our design.
For this reason, flow relations should not be anti-symmetric, as we want to keep the design objects distinct even if they are indistinguishable to this relation.

We split the translation from a flow relation \(\Flows\) to security lattice in three steps: (1) defining the set of security labels; (2) defining a partial-order over that set preserving the flow requirements from \(\Flows\); and (3) extending the set of security labels and the can-flow relation to guarantee there is a least-upper bound for each pair of labels.
The algorithm for this translation is presented in Algorithm \ref{alg:tolattice}.
We explain next the steps in detail.

\begin{algorithm}
\caption{From flow relation \(\Flows\) to security lattice -- \(\toLattice{\Flows}\)}
\label{alg:tolattice}
\begin{algorithmic}
\Require \emph{Flow relation} \(\Flows \subseteq (\FlowsFrom \cup \FlowsTo) \times \FlowsTo\)
\Ensure \emph{Security Lattice} \((\SC,\canflow)\)
\begin{algorithmic}[1]
 \State \(\SC \gets \{\SCTop, \SCBot\}\)
\State \(\canflow \gets \{\}\)
 \ForAll{sequences \((z_0,z_1) \circ \cdots \circ (z_n,z_0)\) defined by pairs in \(\Flows\)}
    \State \(\Label_1 \gets \{z_0, \ldots, z_n\}\)
    \State \(\mathrm{isToInclude} \gets \mathrm{true}\)
    \ForAll{\(\Label_2 \mathop{\in}\SC\)}
        \If{\(\Label_1 \mathop{\subseteq} \Label_2\)}
            \State 
            \(\mathrm{isToInclude} \gets \mathrm{false}\) \Comment{We do not add \(\Label_1\) to \(\SC\) as \(\Label_2\) subsumes it}
        \ElsIf{\(\Label_2 \mathop{\subseteq}\Label_1\)}
            \State 
            \(\SC \gets \SC \setminus \{\Label_2\}\)
            \Comment{Remove labels from \(\SC\) subsumed by \(\Label_1\)}
        \EndIf
    \EndFor
    \If{\(\mathrm{isToInclude}\)}
        \(\SC \gets \SC \cup \{\Label_1\}\)
    \EndIf
 \EndFor
 \ForAll{\(\Label_1 \mathop{\in} \SC\)}
    \State \(\canflow \gets \canflow \mathop{\cup}\{(\bot,\Label_1),\ (\Label_1,\top),\ (\Label_1,\Label_1)\}\) 
    \Comment{Label relates with \(\top\), \(\bot\) and itself}
    \ForAll{\(\Label_2 \mathop{\in} \SC\)}
        \If {\(\Label_1 \times \Label_2 \mathop{\subseteq \Flows}\)}
            \State 
            \(\canflow \gets \canflow \mathop{\cup}\{(\Label_1,\Label_2)\}\)
            \Comment{Add pair of labels to can-flow, if they satisfy \(\Flows\)}
        \EndIf
    \EndFor
 \EndFor
 \If{\((\SC, \canflow)\) is not a lattice}
    \State \((\SC, \canflow) \gets \auxClasses((\SC,\canflow))\)
    \Comment{Complete order}
 \EndIf
\State \Return \(\SC, \canflow)\)
\end{algorithmic}
\end{algorithmic}
\end{algorithm}

\begin{algorithm}
\caption{Add least upper bound labels -- \(\auxClasses\)}
\label{alg:addAux}
\begin{algorithmic}
\Require \((\SC,\canflow)\)
\Ensure Updated \((\SC,\canflow)\)
\begin{algorithmic}[1]
 \State \(\mathrm{ToProcess} \gets \{(\Label_1, \Label_2) \ |\ \{\Label_1, \Label_2\} \subseteq \SC\}\) \Comment{Labels to process}
 \ForAll{\((\Label_1, \Label_2) \in  \mathrm{ToProcess}\)}
    \If {\(\Label_1\) and \(\Label_2\) has no least upper bound}
        \State \(\Label' = \{\mathrm{newVar}(\SC)\}\)
        \Comment{Singleton label with new variable name}
        \State \(\mathrm{ToProcess} \gets \mathrm{ToProcess} \cup \{(\Label', \Label) \ |\ \Label \in \SC\}\)
        \State \(\SC \gets \SC \cup \{\Label'\}\) 
        \State \(\canflow \gets \canflow \cup \{(\bot,\Label'),\ (\Label',\top),\ (\Label',\Label'),\ (\Label_1,\Label'),  (\Label_2,\Label')\}\) 
        \ForAll{\(\Label \in \SC\)}
            \If{\(\{(\Label_1,\Label), (\Label_2,\Label)\} \subseteq \canflow \)}
                \State \(\canflow \gets \canflow \cup \{(\Label', \Label)\}\)
                \Comment{Connect \(\Label'\) to all common upper bounds}
            \EndIf  
        \EndFor
    \EndIf
    \State \(\mathrm{ToProcess}  \gets \mathrm{ToProcess}  \setminus \{(\Label_1, \Label_2)\}\) \Comment{Remove processed labels}
 \EndFor
\State \Return \((\SC,\canflow)\)
\end{algorithmic}
\end{algorithmic}
\end{algorithm}

\paragraph{Security Labels.} 
In our first step, we collect all sets of variables in which all variables are in a loop defined using pairs in the given flow relation (c.f., line 3 in Algorithm \ref{alg:tolattice}).
Formally, for a flow relation \(\Flows\) and for a loop defined by a sequence \((z_0,z_1) \circ \cdots \circ (z_n,z_0)\), where \((z_i, z_{i+1}) \mathop{\in} \Flows\) for \(0\leq i<n\), then \(\{z_0, \ldots, z_{n}\}\) defines a security label.
Note that as flow relations are reflexive, for each variable \(z\) in the domain of \(\Flows\), the singleton set \(\{z\}\) defines a security label that satisfies the loop condition.
We observe, additionally, that due to transitivity of flow relations all subsets of labels defined as explained above will also be security labels according to this definition (i.e., with transitivity we can skip steps in the sequence and find a loop that supports the label with fewer variables).
To remove redundant labels, we keep only the largest set of variables  necessary to define the loop (c.f., lines 7 to 13 in Algorithm \ref{alg:tolattice}).

\paragraph{Can-flow Order.}
After creating the labels, we proceed to defining the can-flow relation as a partial order over the set of labels derived in the previous step.
This is straightforward, as the flow relation is already reflexive and transitive.
A pair of security labels \((\Label, \Label')\) is in the can-flow relation iff all the variables in them are related by the flow relation (c.f., lines 16 to 23 in Algorithm \ref{alg:tolattice}).
Formally, given a flow relation \(\Flows\), for all \(z\mathop{\in} \Label\) and \(z'\mathop{\in}\Label'\), we require \((z,z') \mathop{\in}\Flows\).
Additionally, we make sure that all labels are appropriately connected to the top and bottom elements of the lattice  and to themselves to guarantee reflexivity (c.f., line 17 in Algorithm~\ref{alg:tolattice}).

\begin{example}
In Figure \ref{fig:inc}, we depict an implementation of the shared communication infrastructure, \(I_3\).
To the right of that implementation, we depict the partial order \((\SC,\canflow)\) derived by line 23 of the Algorithm \ref{alg:tolattice}.
Note that the pair of labels \((\{\dwFs\}, \{\dwBs\})\) do not have a lower upper bound, because \(\{\dwFt\}\) and \(\{\dwBt\}\) are uncomparable for the relation \(\canflow\).
When this is the case, we proceed in Algorithm \ref{alg:tolattice} to line 25 and add the missing labels. 
\begin{figure}
\centering
\scalebox{0.7}{ \input 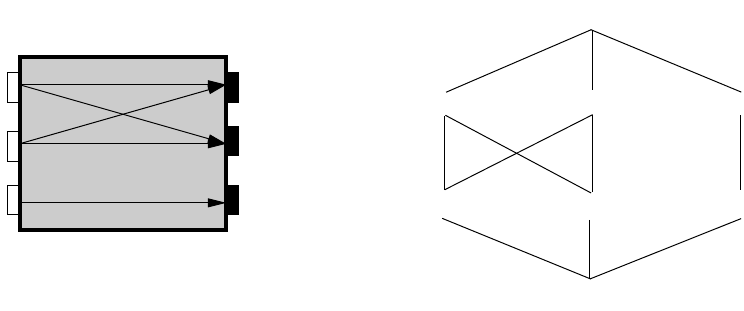_t }
\caption{An implementation for the shared communication \(I_3\) with the partial order derived from its can-flow by line 23 in Algorithm \ref{alg:tolattice}. Note that labels \(\{\dwFs\}\) and \(\{\dwBs\}\) do not have a lower upper bound, which will be added in the next steps of the Algorithm.}
\vspace{-0.6cm}
\label{fig:inc}
\end{figure}
\end{example}

\paragraph{Least upper bounds.}
As illustrated in the example above, the set of labels together with the can-flow relation by line 23 in Algorithm \ref{alg:tolattice} may not define a lattice.
In Algorithm \ref{alg:addAux}, we present a possible way to extend the set of security labels and the can-flow relation to guarantee that each pair of labels has a unique least upper bound.
There are multiple ways one could define and implement this extension.
In this work, we focus on presenting a simple and intuitive extension for demonstration purposes.

Intuitively, for a flow relation over \((\FlowsFrom \cup \FlowsTo) \times \FlowsTo\), we need to create intermediate labels connecting labels with \(\FlowsFrom\) variables to labels with \(\FlowsTo\) variables.
Note that there are no labels that mix variables of \(\FlowsTo\) and \(\FlowsFrom\) because there is no loop containing both types of variables in the flow relation \(\Flows \subseteq (\FlowsFrom \cup \FlowsTo) \times \FlowsTo\).
To define the intermediate labels, we extend the domain of security labels with sets over fresh variables (obtained with the function \(\mathrm{newVar}\) that returns a variable name not yet in \(\SC\)).
The new labels are internal to a contract implementation (i.e., irrelevant to the system's design) and characterize the expected security label when we combine information from different sources.
\\

\begin{example}
Algorithm \ref{alg:tolattice} outputs lattices in Figure \ref{fig:sl} and Figure \ref{fig:complete}
when the input are the flow relations in implementations \(I_1\) and \(I_3\), respectively.
\begin{figure}
\centering
\scalebox{0.7}{ \input 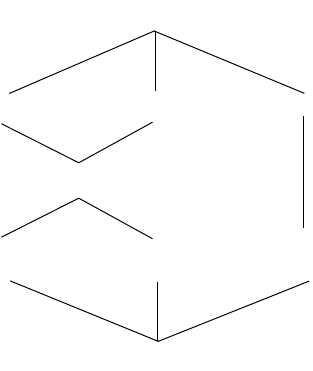_t }
\caption{Security lattice returned from Algorithm \ref{alg:tolattice} for the flow relation in \(I_3\).}
\vspace{-0.6cm}
\label{fig:complete}
\end{figure}
\end{example}

We prove below that the relation output by Algorithm \ref{alg:tolattice} is indeed a lattice and is equivalent to the flow restrictions of the input \(\Flows\).
For a flow relation \(\Flows \subseteq \AllVars \times \FlowsTo\),  a structure \((\SC,\canflow)\) \emph{is equivalent to it}, denoted 
\((\SC,\canflow) \equiv_{\AllVars\times \FlowsTo} \Flows\) iff for all pairs of labels \((\Label_1, \Label_2) \in 2^\AllVars  \times 2^\FlowsTo\), they are in the can-flow relation, \((\Label_1, \Label_2) \in \canflow\), iff,
for all \((z_1,z_2) \in \Label_1 \times \Label_2\) we have \((z_1,z_2) \in \Flows\).

\begin{theorem}
\label{thm:toLattice:correctness}
Let \(\Flows \subseteq \AllVars \times \FlowsTo\) be a flow relation  and
\(\toLattice{\Flows} = (\SC,\canflow)\).
Then, \((\SC,\canflow)\) defines a security lattice that is equivalent to the can-flow restrictions of \(\Flows\) over \(\AllVars \times \FlowsTo\), i.e.,
\((\SC,\canflow) \equiv_{\AllVars\times \FlowsTo} \Flows\).
\end{theorem}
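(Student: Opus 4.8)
The plan is to prove the two assertions of the theorem separately: first that \((\SC,\canflow)\) is a security lattice in the sense of Denning's axioms (finiteness, a partial order, a unique least element, and a least upper bound for every pair), and second that \((\SC,\canflow) \equiv_{\AllVars \times \FlowsTo} \Flows\). Following the structure of Algorithm~\ref{alg:tolattice}, I would first analyse the structure obtained just after line~23, before the lattice-completion call, and then argue that the call to \(\auxClasses\) only adds the missing least upper bounds without disturbing what was already established. The starting observation is a structural characterisation of the labels produced by lines~3--15: a set of variables survives the maximality filtering (lines~7 to~13) exactly when it is a \emph{maximal} set of variables lying on a common cycle of \(\Flows\). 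Finiteness of \(\SC\) up to line~23 is then immediate, since there are at most \(2^{|\AllVars|}\) such sets.

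For the partial-order requirement I would check reflexivity, transitivity and antisymmetry of \(\canflow\) as built in lines~16--23. Reflexivity and the connections to \(\SCBot\) and \(\SCTop\) are added explicitly by line~17, and these also witness that \(\SCBot\) and \(\SCTop\) are the unique least and greatest elements (no pair \((\Label,\SCBot)\) or \((\SCTop,\Label)\) with \(\Label\) a genuine variable label is ever added). Transitivity follows from transitivity of \(\Flows\) together with the defining membership test \(\Label_1\times\Label_2\subseteq\Flows\) used in lines~16--23. The only delicate point is antisymmetry: if two \emph{distinct} genuine labels \(\Label_1,\Label_2\) satisfied both \((\Label_1,\Label_2)\in\canflow\) and \((\Label_2,\Label_1)\in\canflow\), then \(\Label_1\times\Label_2\subseteq\Flows\) and \(\Label_2\times\Label_1\subseteq\Flows\), so \(\Label_1\cup\Label_2\) would itself lie on a common cycle and hence subsume both under the maximality filtering, contradicting \(\Label_1,\Label_2\in\SC\). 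This is exactly where keeping only maximal cycle-sets is needed.

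The equivalence \((\SC,\canflow)\equiv_{\AllVars\times\FlowsTo}\Flows\) is then almost immediate for labels drawn from \(\SC\): for genuine labels \(\Label_1\in 2^\AllVars\) and \(\Label_2\in 2^\FlowsTo\), the construction of lines~16--23 inserts \((\Label_1,\Label_2)\) into \(\canflow\) precisely when \(\Label_1\times\Label_2\subseteq\Flows\), giving both directions of the ``iff'', with the reflexive self-pairs of line~17 also satisfying \(\Label\times\Label\subseteq\Flows\) because each label is a cycle-set. It remains to argue that the completion step preserves this: the labels created by \(\auxClasses\) are singletons over \emph{fresh} variables (via \(\mathrm{newVar}\)), hence not elements of \(2^\AllVars\) and so outside the range of the equivalence quantifier, and every pair that \(\auxClasses\) adds to \(\canflow\) (lines~7 and~10 of Algorithm~\ref{alg:addAux}) carries a fresh label on at least one side. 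Consequently the restriction of \(\canflow\) to genuine labels is unchanged by the completion, and the equivalence already proved after line~23 is preserved.

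The remaining, and I expect hardest, obligation is to show that \(\auxClasses\) terminates and delivers a genuine lattice, i.e.\ that afterwards every pair of elements has a \emph{unique} least upper bound while the relation stays a partial order. That each processed pair ends up with a least upper bound is built in: the new label \(\Label'\) is placed above \(\Label_1,\Label_2\) (line~7) and below every common upper bound of theirs (line~10), so it is a least upper bound, while the seeding of pairs involving \(\Label'\) (line~5) propagates the property to the fresh labels; no cycle is introduced, so antisymmetry survives, because \(\Label'\) is inserted strictly between \(\{\Label_1,\Label_2\}\) and their common upper bounds. The subtle part is \emph{termination}, since newly created labels spawn new pairs to process. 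I would control this by assigning to each label \(\Label\) the down-set \(D(\Label)=\{\Label_0\in\SC_0 : \Label_0\canflow\Label\}\) of original labels below it, observing that a fresh join is created only for a pair currently lacking a least upper bound and that such a label realises a down-set not realised by any existing element; since \(D(\Label)\) ranges over the finite set \(2^{\SC_0}\), only finitely many labels can be added, bounding the process by the size of the join-completion of the finite poset. Making this down-set argument fully rigorous, in particular verifying that the ``has no least upper bound'' test on line~3 of Algorithm~\ref{alg:addAux} indeed prevents recreating an already-present join, is the main obstacle, and it is what turns the intuitive ``add the missing joins'' idea into a terminating completion to a bounded lattice.
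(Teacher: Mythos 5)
Your proposal follows essentially the same route as the paper's proof: establish the equivalence and the partial-order properties for the structure built by line~23 of Algorithm~\ref{alg:tolattice}, then argue that \(\auxClasses\) preserves the equivalence (because the pairs it adds always involve fresh labels outside \(2^{\AllVars}\)) and supplies the missing least upper bounds. You are in fact more explicit than the paper on the two points it glosses over --- the antisymmetry argument via maximality of cycle-sets and the termination of the completion step --- but the decomposition and the key observations coincide.
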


\begin{proof}
We start by proving that \((\SC,\canflow)\) defined at line 23 of Algorithm~\ref{alg:tolattice} (before we call \(\auxClasses\) procedure) 
(i) defines an order that is equivalent to \(\Flows\) over its domain, \((\SC,\canflow) \equiv_{\AllVars\times \FlowsTo} \Flows\), and
(ii) \(\canflow\) defines a partial order over the set of labels \(\SC\).

We observe that in Algorithm~\ref{alg:tolattice} (a) by line 16, we only have labels with variables in the domain of \(\Flows\) (i.e., \(\SC \subseteq 2^{\FlowsFrom \cup \FlowsTo} \cup \{\SCTop, \SCBot\}\));
(b) only in lines 17 and 20 we add pairs of labels to \(\canflow\); and (c) in line 17 we add the reflexive pairs and the connections to \(\SCTop\) and \(\SCBot\), while, by the condition at line 19, in line 20 we only add the pair \((\Label_1,\Label_2)\) to the can-flow relation when all their elements \(z_1 \mathop{\in} \Label_1\) and 
\(z_2 \mathop{\in} \Label_2\) are pairs in the flow relation, \((z_1,z_2) \in \Flows\).
By these three observations, it follows that by line 23 of Algorithm~\ref{alg:tolattice} \((\SC,\canflow) \equiv_{\AllVars\times \FlowsTo} \Flows\) holds.
We prove now (ii).
After line 17 in Algorithm~\ref{alg:tolattice}, \(\canflow\) is a reflexive relation.
By transitivity of flow relations and the point (i) proved above, then \(\canflow\) is also a transitive relation.
Finally, by removal of the loops in lines 3 to 15, it follows that \(\canflow\) is antisymmetric.

Now, we are only missing to prove that by line 27  in Algorithm \ref{alg:tolattice} the returned \((\SC,\canflow)\)
(iii) is equivalent to \(\Flows\),\((\SC,\canflow) \equiv_{\AllVars\times \FlowsTo} \Flows\), and
(iv) defines a lattice.
First, we observe that all pairs of labels added to \(\canflow\) in lines 7 and 10 of Algorithm \ref{alg:addAux} do not have variables from \(\AllVars\).
Then, by (i) proved previously, (iii) holds.
If \((\SC,\canflow)\) is a lattice by line 24 Algorithm \ref{alg:tolattice}, then \((\SC,\canflow)\) is not changed so (iv) holds.
We consider now the case that \((\SC,\canflow)\) does not define a lattice by line 24.
Due to the elements added to \(\canflow\) in line 7 of Algorithm \ref{alg:addAux} and (ii) proved above, it follows that \(\canflow\) is reflexive relation.
Moreover, after the same line and from line 17 of Algorithm \ref{alg:tolattice}, it follows that \(\SCBot\) and \(\SCTop\) are the lower and upper bound of \((\SC, \canflow)\), respectively.
Transitivity of \(\canflow\) follows from the pairs added in line 7 and 10 in Algorithm \ref{alg:tolattice} and (ii) proved above.
By lines 1 and 5 in Algorithm \ref{alg:addAux}, we have the guarantee that all pair of labels added to \(\SC\) will be checked for having a least upper bound.
If they do not have it, a new label is created that is connected to all their upper bounds (c.f., line 10).
As all the upper bounds to any of pair of labels processed are either in the algorithm input structure or are added only during the loop iteration we add a new label, then we have the guarantee that no new upper bounds are added after a pair of labels is processed.
Then, at each iteration the new label \(\Label'\) is the least upper bound for the labels \(\Label_1\) and \(\Label_2\).
Hence the returned \((\SC,\canflow)\) is a lattice.
\end{proof}

The translation from a security lattice to a flow relation is straightforward, as a lattice is by definition already reflexive and transitively closed.
Formally, given a security lattice \((\SC,\canflow)\) we define its derived flow relation over a domain \(\AllVars \times \FlowsTo\) as follows:
\[
\toFlowRel{(\SC,\canflow), \AllVars \times \FlowsTo} = 
\{(z,z') \ |\ (\Label,\Label') \mathop{\in}\canflow,\  z\mathop{\in} (\Label\mathop{\cap} \AllVars) \tAnd z'\mathop{\in} (\Label'\mathop{\cap} \FlowsTo)\}.
\]

We prove below that after we translate a flow relation to a security lattice, using the operator defined by Algorithm \ref{alg:tolattice}, we can use the operator defined above to get back the original flow relation.

\begin{theorem}
\label{th:fromFlowsBackToFlows}
Let \(\Flows\subseteq (\FlowsFrom \cup \FlowsTo) \times \FlowsTo\) be a flow relation with \(\AllVars = \FlowsFrom \cup \FlowsTo\).
Then:
\[\toFlowRel{\toLattice{\Flows}, \AllVars \times \FlowsTo} = \Flows.\]
\end{theorem}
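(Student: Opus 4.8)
The plan is to prove the set equality \(\toFlowRel{\toLattice{\Flows}, \AllVars \times \FlowsTo} = \Flows\) by double inclusion, writing \((\SC,\canflow) = \toLattice{\Flows}\) and \(R = \toFlowRel{(\SC,\canflow), \AllVars \times \FlowsTo}\) throughout. The structural fact that drives both inclusions is that every label of \(\SC\) is of exactly one of three kinds: a \emph{genuine} loop-label that is a subset of \(\AllVars\) (built in lines 3--15 of Algorithm~\ref{alg:tolattice}), one of the special bounds \(\SCTop,\SCBot\), or a fresh singleton introduced by \(\auxClasses\) (Algorithm~\ref{alg:addAux}); the latter two kinds contain no variable of \(\AllVars\). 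Since \(\toFlowRel{\cdot}\) intersects the left label with \(\AllVars\) and the right label with \(\FlowsTo\), I first record that only genuine labels can contribute a pair to \(R\): if \(z \in \Label\cap\AllVars\) then \(\Label\) is genuine, and likewise \(\Label'\cap\FlowsTo \neq \emptyset\) forces \(\Label'\) genuine. This is exactly the observation already used in the proof of Theorem~\ref{thm:toLattice:correctness}, namely that the pairs added by \(\auxClasses\) carry no \(\AllVars\)-variables.

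For the inclusion \(R \subseteq \Flows\), I would take \((z,z') \in R\), witnessed by some \((\Label,\Label') \in \canflow\) with \(z \in \Label\cap\AllVars\) and \(z'\in\Label'\cap\FlowsTo\). By the filtering observation both \(\Label\) and \(\Label'\) are genuine, so \((\Label,\Label')\) was placed into \(\canflow\) either at line 17 (a reflexive pair \(\Label=\Label'\)) or at line 20 of Algorithm~\ref{alg:tolattice} under the guard \(\Label\times\Label'\subseteq\Flows\). In the line-20 case \((z,z')\in\Label\times\Label'\subseteq\Flows\) immediately. In the reflexive case \(z\) and \(z'\) lie in one and the same loop-label \(\{z_0,\dots,z_n\}\), and walking around the loop together with transitivity of the flow relation yields \((z,z')\in\Flows\). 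Equivalently, one may invoke the equivalence \((\SC,\canflow)\equiv_{\AllVars\times\FlowsTo}\Flows\) from Theorem~\ref{thm:toLattice:correctness} for this genuine pair.

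For \(\Flows \subseteq R\), I would take \((z,z')\in\Flows\); by the typing \(\Flows\subseteq(\FlowsFrom\cup\FlowsTo)\times\FlowsTo\) we have \(z\in\AllVars\) and \(z'\in\FlowsTo\). Using the singleton-label observation (each variable occurring in \(\Flows\) sits in some genuine label of \(\SC\), since flow relations are reflexive on the relevant variables so that \(\{z\}\) satisfies the loop condition), choose genuine labels \(\Label \ni z\) and \(\Label'\ni z'\) in \(\SC\). The key computation is to verify the guard \(\Label\times\Label'\subseteq\Flows\): for arbitrary \(w\in\Label\) and \(w'\in\Label'\), the loop through \(\Label\) gives \((w,z)\in\Flows\), the hypothesis gives \((z,z')\in\Flows\), and the loop through \(\Label'\) gives \((z',w')\in\Flows\), so transitivity delivers \((w,w')\in\Flows\). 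Hence \((\Label,\Label')\) is added at line 20 of Algorithm~\ref{alg:tolattice}, and since \(\auxClasses\) only ever adds pairs to \(\canflow\), it survives to the output; as \(z\in\Label\cap\AllVars\) and \(z'\in\Label'\cap\FlowsTo\), this witnesses \((z,z')\in R\).

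I expect the main obstacle to be bookkeeping rather than depth: one must make the three-way classification of labels airtight so that the intersections in \(\toFlowRel{\cdot}\) really do discard \(\SCTop\), \(\SCBot\) and the fresh \(\auxClasses\)-labels, and one must be careful that a variable generally lives in a maximal loop-label rather than as a singleton, since singletons are removed by the subsumption test in lines 7--13. The recurring technical lemma, used in both directions, is that within a single loop-label every pair of its members is related by \(\Flows\), and that flows chain through a shared variable; both follow immediately from transitivity of flow relations, so no genuinely hard step remains once the label classification is in place.
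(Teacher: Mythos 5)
Your proof is correct and follows essentially the same route as the paper, which simply invokes the equivalence \(\toLattice{\Flows} \equiv_{\AllVars\times\FlowsTo} \Flows\) from Theorem~\ref{thm:toLattice:correctness} and then appeals to the definition of \(\toFlowRel{\cdot}\); your double-inclusion argument just unfolds that appeal. If anything, your version is more careful than the paper's two-line proof, since the inclusion \(\Flows \subseteq \toFlowRel{\toLattice{\Flows}, \AllVars\times\FlowsTo}\) genuinely needs the coverage fact that every variable of \(\Flows\) lies in some label of \(\SC\) (the label-level equivalence alone only yields the other inclusion), and you make this explicit — inheriting only the same mild imprecision as the paper about why \(\FlowsFrom\)-variables receive singleton labels despite reflexivity holding only over \(\FlowsTo\).
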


\begin{proof}
By Theorem \ref{thm:toLattice:correctness},  
\(\toLattice{\Flows} \equiv_{\AllVars \times \FlowsTo} \Flows\), i.e., for \((\SC,\canflow) = \toLattice{\Flows}\) and
 for all pairs of labels \((\Label_1, \Label_2) \in 2^\AllVars  \times 2^\FlowsTo\), they are in the can-flow relation, \((\Label_1, \Label_2) \in \canflow\), iff,
for all \((z_1,z_2) \in \Label_1 \times \Label_2\), we have \((z_1,z_2) \in \Flows\).
Hence, by definition of \(\toFlowRel{.}\), \(\toFlowRel{\toLattice{\Flows}, \AllVars \times \FlowsTo} = \Flows\) holds.
\end{proof}

We show in the theorem below that, under reasonable constraints, when given a security lattice, if we translate it to a flow relation followed by a translation to a lattice, then at the end, we get back the original lattice.
The first constraint, specifying that variables in \(\FlowsFrom\) can only define singleton labels, is motivated by the observation that for a flow relation \(\Flows \subseteq (\FlowsFrom \cup \FlowsTo) \times \FlowsTo\), there are only outgoing flows from \(\FlowsFrom\). 
Then, there cannot be loops with variables in \(\FlowsFrom\), and in our translation from flow relations to lattices, we defined that loops are the only way to get non-singleton labels.
The second constraint is derived from our implementation decision in Algorithm \ref{alg:tolattice} only to keep maximal sets.

\begin{theorem}
\label{th:fromLatBackToLat}
Let 
\(\SC \subseteq \{\{u\} \ |\ u\in \FlowsFrom\} \cup 2^{\FlowsTo} \cup \{\SCTop,\SCBot\}\) be a set of maximal sets of variables, i.e., for all \(\{\Label_1, \Label_2\} \subseteq \SC\) then \(\Label_1 \not\subseteq \Label_2\).
For all relations \(\canflow \subseteq \SC \times \SC\) s.t.\ 
\((\SC,\canflow)\) defines a lattice, then:
\[\toLattice{\toFlowRel{(\SC,\canflow), (\FlowsFrom \cup \FlowsTo) \times \FlowsTo}} = (\SC,\canflow).\]
\end{theorem}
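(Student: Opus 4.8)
The plan is to write \(\Flows \Def \toFlowRel{(\SC,\canflow),(\FlowsFrom\cup\FlowsTo)\times\FlowsTo}\) and \((\SC',\canflow')\Def\toLattice{\Flows}\), and to prove \((\SC',\canflow')=(\SC,\canflow)\) in two stages: first that the two structures carry the same labels, \(\SC'=\SC\), and then that they carry the same order, \(\canflow'=\canflow\). The backbone is Theorem~\ref{thm:toLattice:correctness}, which already guarantees that \((\SC',\canflow')\) is a lattice with \((\SC',\canflow')\equiv_{\AllVars\times\FlowsTo}\Flows\); the remaining work is to show that the original lattice is the \emph{unique} structure of this form equivalent to \(\Flows\), so that the round trip cannot move.

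First I would record the shape of \(\Flows\). By the definition of \(\toFlowRel{\cdot}\), a pair \((z,z')\) lies in \(\Flows\) exactly when some \(\canflow\)-related pair \((\Label,\Label')\) witnesses \(z\in\Label\) and \(z'\in\Label'\cap\FlowsTo\). Two consequences drive everything that follows: (i)~since \(\FlowsFrom\cap\FlowsTo=\emptyset\), no variable of \(\FlowsFrom\) ever occurs as a second coordinate, so every \(\FlowsFrom\)-singleton is a pure source with no incoming flow; and (ii)~reflexivity of \(\canflow\) on each label makes \(\Flows\) reflexive over \(\FlowsTo\) and renders all variables of a single label mutually reachable. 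As the paper already notes that a lattice is reflexive and transitively closed, \(\Flows\) is a genuine flow relation, so Algorithm~\ref{alg:tolattice} and Theorem~\ref{thm:toLattice:correctness} apply to it.

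The heart of the argument is \(\SC'=\SC\). Because \(\Flows\) is transitive, the maximal loop-sets collected in lines~3--15 of Algorithm~\ref{alg:tolattice} are exactly the strongly connected components of \(\Flows\) under the mutual-reachability equivalence. I would show these coincide with the labels of \(\SC\): each \(\Label\in\SC\) is a loop-set by consequence~(ii); each \(\FlowsFrom\)-singleton is its own component by consequence~(i); and the incomparability (maximality) hypothesis on \(\SC\) ensures that no two distinct labels fuse into a larger component and that no label is strictly contained in a loop-set, so the maximal loop-sets are precisely the elements of \(\SC\). It then remains to argue that the completion procedure \(\auxClasses\) introduces no fresh label; this is exactly where the hypothesis that \((\SC,\canflow)\) is \emph{already} a lattice is indispensable, since it supplies a least upper bound for every pair, so the order obtained at line~23 is already a lattice and Algorithm~\ref{alg:tolattice} returns at line~24.

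With \(\SC'=\SC\) in hand, I would match the orders. For label pairs \((\Label_1,\Label_2)\) with \(\Label_2\subseteq\FlowsTo\), the equivalence \((\SC',\canflow')\equiv_{\AllVars\times\FlowsTo}\Flows\) characterises membership in \(\canflow'\) as \(\Label_1\times\Label_2\subseteq\Flows\), and I would verify that \(\canflow\) obeys the same criterion: the forward direction is immediate from \(\toFlowRel{\cdot}\), and the backward direction uses antisymmetry and maximality to force \(\Label_1\) and \(\Label_2\) to be the only witnessing labels. The remaining edges --- the reflexive pairs, the edges to \(\SCTop\) and from \(\SCBot\), and the edges into \(\FlowsFrom\)-singletons --- match by construction (line~17) and by uniqueness of the lattice's top and bottom, the last case relying on \(\FlowsFrom\)-singletons having no incoming flow. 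I expect the label-recovery step to be the main obstacle: proving that the maximal loop-sets of \(\Flows\) are exactly the original labels rather than coarser unions of them is where the maximality of \(\SC\) and the source character of the \(\FlowsFrom\)-variables must be combined most carefully, and it is also the step most sensitive to the edge cases around \(\SCTop\), \(\SCBot\), and the \(\FlowsFrom\)-singletons.
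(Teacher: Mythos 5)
Your proposal is correct and follows essentially the same route as the paper's proof: translate \((\SC,\canflow)\) to \(\Flows\), use the maximality hypothesis to show the loop-sets collected by line~15 of Algorithm~\ref{alg:tolattice} are exactly \(\SC\), observe that the order at line~23 is already a lattice so \(\auxClasses\) is never invoked, and recover \(\canflow\) from the equivalence \(\equiv_{\AllVars\times\FlowsTo}\) supplied by Theorem~\ref{thm:toLattice:correctness}. You spell out two points the paper leaves implicit (why the loop-sets cannot fuse distinct labels, and why the edges not covered by the equivalence --- those into \(\FlowsFrom\)-singletons and the \(\SCTop\)/\(\SCBot\) edges --- also match), but the decomposition and the key lemmas are the same.
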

\begin{proof}
Let \(\Flows = \toFlowRel{(\SC,\canflow), (\FlowsFrom \cup \FlowsTo) \times \FlowsTo}\).
Note that, by \(\canflow\) being a partial order over \(\SC\) and by definition of \(\toFlowRel{.}\), 
the only loops in the flow relation output by \(\toFlowRel{(\SC,\canflow), (\FlowsFrom \cup \FlowsTo) \times \FlowsTo}\)
are between variables in the same label in \(\SC\).
Let \(\SC_{15}\) be the set of security labels by line 15 in  Algorithm \ref{alg:tolattice}.
By \(\SC\) having only maximal sets, then \(\SC_{15} = \SC\).
Now, let \(\canflow_{23}\) be the can-flow relation defined by line 23 of Algorithm \ref{alg:tolattice} and \(\SC_{23}\) the set of labels (note that \(\SC_{15} = \SC_{23} = \SC\)).
By a reasoning analogous to the proof of  Theorem~\ref{thm:toLattice:correctness}, we know that 
\((\SC_{23},\canflow_{23}) \equiv_{\AllVars\times \FlowsTo} \Flows\).
As \(\Flows\) is derived from a lattice, then \((\SC_{23},\canflow_{23})\) is also a lattice and \(\toLattice{\toFlowRel{(\SC,\canflow), (\FlowsFrom \cup \FlowsTo) \times \FlowsTo}} = (\SC_{23},\canflow_{23}) = (\SC, \canflow_{23})\).
Then, by \((\SC_{23},\canflow_{23}) \equiv_{\AllVars\times \FlowsTo} \Flows\), it follows that \(\canflow_{23} = \canflow\).
Hence \(\toLattice{\toFlowRel{(\SC,\canflow), (\FlowsFrom \cup \FlowsTo) \times \FlowsTo}} = (\SC_{23},\canflow_{23}) = (\SC,\canflow)\)
\end{proof}

\subsubsection{Information-flow and Security Lattice Contracts.}
Given an information-flow contract \(C_{\fl} = (\InVars, \OutVars, \AFlow, \GFlow)\), we define the derived security lattice contract as:
\(\toLatC{C_{\fl}} = (\InVars, \OutVars, \toLattice{\AFlow}, \toLattice{\GFlow})\).
Information-flow contracts were introduced in \cite{IFInterfaces20Arxiv} along with a composition operator and a refinement relation over such contracts.
Using the translations presented above, we define composition and refinement over security lattice contracts directly from their counterpart in information-flow contracts (we use the same definitions and translate back and forth between flows and lattices).

In \cite{IFInterfaces20Arxiv}, we proved that the composition and refinement of information-flow contracts adequately capture the semantics of their counterpart in information-flow interfaces.
These results show that information-flow contracts, just as for information-flow interfaces, satisfy incremental design and independent implementability \cite{IFInterfaces20Arxiv}.
Then, using Theorem \ref{th:fromFlowsBackToFlows}, all the results mentioned above for information-flow contracts also hold for security lattices contracts.

\section{Conclusions}
\label{sec:conc}

In this paper, we proposed security lattices as an alternative semantics for the information-flow interface theory. 
Security lattices are the backbone of many successful, readily available techniques to verify and enforce security policies.
Examples range from type systems \cite{focardi2011types} to static analysis \cite{huangStaticAnalysisLat04} or runtime enforcement using secure multi-execution \cite{groefFlowfox12,austin2012multiple}, to name a few.

Equipping interfaces with a semantic interpretation based on a commonly used formalism for defining and enforcing security policies facilitates the bridge between the security practitioners and the security-by-design paradigm.

\subsubsection{Acknowledgements} This project was funded in part by the Austrian Science Fund (FWF) SFB project SpyCoDe F8502 and by the ERC-2020-AdG 101020093.

\bibliographystyle{splncs04}
\bibliography{main}

\clearpage
\appendix

\end{document}